\documentclass[12pt]{article}

\usepackage{amsmath,amsfonts,amssymb,amsthm}

\title{Non-existence results for vectorial bent functions with Dillon exponent}
\date{}
\author{Lucien Lapierre\\
Petr Lison\v{e}k
\\
Department of Mathematics\\
Simon Fraser University\\
Burnaby, BC, V5A 1S6\\
Canada\\
\  \\
{\tt plisonek@sfu.ca}
}

\newtheorem{theorem}{Theorem}[section]

\newtheorem{proposition}[theorem]{Proposition}

\theoremstyle{definition}
\newtheorem{definition}[theorem]{Definition}
\newtheorem{example}[theorem]{Example}

\def\F{{\mathbb F}}

\def\K{{\mathcal K}}

\def\Tr{{\rm Tr}}

\def\T{{\rm Tr}}
\def\S{{\rm S}}

\def\al{\alpha}

\begin{document}

\maketitle

\begin{abstract}
We prove new non-existence results
for vectorial monomial Dillon type bent functions
mapping the field of order $2^{2m}$
to the field of order $2^{m/3}$. When $m$ is odd
and $m>3$ we show that there are no such functions.
When $m$ is even we derive a condition for the bent coefficient.
The latter result allows us to find examples
of bent functions with $m=6$ in a simple way.
\end{abstract}

\section{Background}

Bent functions are maximally nonlinear functions
from $\F_2^n$ to $\F_2^k$
that have important applications in cryptography,
coding theory, communication sequences
and combinatorial design theory 
\cite{CarletMesnager,GolombGongBook,Sihem}.
For $k=1$,
in the English literature bent functions 
were introduced by Rothaus 
first in the 1960s
in a technical report with restricted circulation,
and then in 1976 in his journal paper.
In 1991 Nyberg introduced bent functions for arbitrary $k$.
Binary bent functions exist only for even $n$ and for $k\le n/2$.
A stronger concept was proposed by Gong and Golomb \cite{GongGolombDES}
and then formally introduced and studied
by Youssef and Gong \cite{youssefgong}:
A function is hyperbent if it is bent after each bijective
monomial substitution on its domain.
We refer the reader 
to the recent survey of bent functions research~\cite{CarletMesnager} 
for an extensive account of this subject.

Let $\F_{2^s}$ denote the field of order $2^s$,
and let $\F_{2^s}^*$ denote the multiplicative group
of its non-zero elements.
For positive integers $s,t$ such that $t|s$
let $\Tr^s_t$ denote the trace function from $\F_{2^s}$ to $\F_{2^t}$,
that is, $\Tr^s_t(x)=\sum_{k=0}^{s/t-1}x^{2^{kt}}$.
When $t=1$ the trace function is called {\em absolute trace.}

Let $f$ be a function from 
$\F_{2^n}$ to $\F_{2^k}$.
For $a \in \F_{2^n}$ and $b \in \F_{2^k}^*$
the {\em  Walsh transform} of $f$ at $(a,b)$ is
\[W_f (a,b) = \sum_{x \in \F_{2^n}} (-1)^{\Tr^k_1(bf(x)) + \Tr^n_1(ax)}.\]

\begin{definition}
Let $n$ be even and let $f:\F_{2^n}\rightarrow\F_{2^k}$.
We say that $f$ is {\em bent} if 
$|W_f (a,b)| =2^{n/2}$ for all $a \in \F_{2^n}$ and all $b \in \F_{2^k}^*$.
\end{definition}

Let $a \in \F_{2^m}$. The {\em Kloosterman sum over $\F_{2^m}$} is defined as 
$$\K_{2^m}(a) := 1+\sum_{x \in \F_{2^m}^*} (-1)^{\Tr^m_1(\frac{1}{x}+ax)}.$$
If $a \in \F_{2^m}^*$ is such that $\K_{2^m}(a) = 0$, then $a$ is called 
a {\em Kloosterman zero in $\F_{2^m}$}.

Lison\v{e}k and Moisio 
\cite{KZerosNotInSbflds}
proved that proper subfields of  $\F_{2^m}$ do not
contain Kloosterman zeros, with one exception.

\begin{theorem}
\cite{KZerosNotInSbflds}
\label{thm:KZerosNotInSbflds}
Let $a \in \F_{2^k}^*$. If $\K_{2^{kt}}(a) = 0$ for integer $t>1$, then $kt=4$ and $a = 1$.
\end{theorem}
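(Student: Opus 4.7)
The plan is to use Weil's description of the Kloosterman sum together with a short norm identity in an imaginary quadratic field. By Weil's theorem one has $\K_q(a) = 1 - (\omega + \bar\omega)$ for a pair of complex conjugate algebraic integers $\omega, \bar\omega$ with $\omega\bar\omega = q := 2^k$, and the standard lifting identity gives
\begin{equation*}
\K_{q^t}(a) = 1 - (\omega^t + \bar\omega^t)
\end{equation*}
whenever $a \in \F_q^* \subseteq \F_{q^t}^*$. The hypothesis $\K_{q^t}(a) = 0$ is thus equivalent to $\omega^t + \bar\omega^t = 1$.

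The central observation is the norm identity
\begin{equation*}
(\omega^t - 1)(\bar\omega^t - 1) \;=\; q^t - (\omega^t + \bar\omega^t) + 1 \;=\; 2^{kt},
\end{equation*}
which exhibits $\omega^t - 1$ as an algebraic integer of trace $-1$ and absolute norm $2^{kt}$ in the imaginary quadratic field $F := \mathbb{Q}(\omega)$. From here I would split into cases according to whether the rational prime $2$ is inert, ramifies, or splits in $\mathcal{O}_F$. The inert case is ruled out immediately, since $\omega^t - 1$ would then be forced to be a rational integer, making $\omega^t$ real, which it is not. In the remaining cases one factorizes the ideal $(\omega^t - 1)$ into primes above $2$ and couples this with the classical divisibility $\K_q(a) \equiv 0 \pmod 4$ for $k \ge 2$, which gives $\omega + \bar\omega \equiv 1 \pmod 4$, and with the Lucas-type recurrence $L_{r+1} = (\omega + \bar\omega)\,L_r - q L_{r-1}$ satisfied by $L_r := \omega^r + \bar\omega^r$, to pin down severe constraints on the triple $(k, t, \omega + \bar\omega)$.

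The main obstacle will be the split case in the presence of a nontrivial class group of $F$. My intended way around it is to exploit both the norm and the trace of $\omega^t - 1$ simultaneously: matching the forced shape $\omega^t = (1 + i\sqrt{4q^t - 1})/2$ against the explicit formula $\omega^t = ((s + i\sqrt{4q - s^2})/2)^t$, where $s = 1 - \K_q(a)$, yields a Diophantine relation whose $2$-adic analysis admits only finitely many candidates. The small case $k = 1$, where the mod-$4$ divisibility is not available in its strong form, is settled separately by a direct recursion with $s = -1$, $q = 2$, which produces $t = 4$. Both surviving solutions $(k, t) \in \{(1,4),(2,2)\}$ correspond to the same Kloosterman zero $\K_{16}(1) = 0$, with associated Weil root $\omega = (-1 + i\sqrt{7})/2$.
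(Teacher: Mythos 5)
Your opening moves are sound and in fact coincide with how the cited source (this paper only quotes the theorem; the proof is in Lison\v{e}k--Moisio \cite{KZerosNotInSbflds}) begins: the Carlitz lifting formula $\K_{q^t}(a)=1-(\omega^t+\bar\omega^t)$ with $\omega\bar\omega=q$, the reduction of the hypothesis to $\omega^t+\bar\omega^t=1$, and the norm computation $(\omega^t-1)(\bar\omega^t-1)=q^t$ are all correct. One small slip in the easy part: in the inert case $\omega^t-1$ is \emph{not} forced to be a rational integer when $F=\mathbb{Q}(\sqrt{-3})$, since the unit group there has order $6$ (and $2$ is indeed inert in that field); the clean fix is to note that in both the inert and ramified cases the unique prime $P$ above $2$ is self-conjugate, so $P$ would divide $(\omega^t-1)+(\bar\omega^t-1)=-1$, a contradiction. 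That leaves the split case, as you say.

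The genuine gap is the sentence claiming that matching $\omega^t=(1+i\sqrt{4q^t-1})/2$ against $\omega^t=((s+i\sqrt{4q-s^2})/2)^t$ ``yields a Diophantine relation whose $2$-adic analysis admits only finitely many candidates.'' That is precisely where the entire difficulty of the theorem is concentrated, and $2$-adic analysis cannot deliver it. Writing $s^2-4q=-Df^2$ and $1-4q^t=-Dg^2$ gives the pair of generalized Ramanujan--Nagell equations $s^2+Df^2=2^{k+2}$ and $1+Dg^2=2^{kt+2}$; once $D\equiv 7\pmod 8$ (which the split condition forces), \emph{every} exponent is $2$-adically admissible, and the Lucas recurrence $L_{r+1}=sL_r-qL_{r-1}$ modulo powers of $2$ only yields congruences such as $s\equiv 1\pmod 8$, never an upper bound on $t$. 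Concretely, even after forcing $D=7$ you would still have to exclude $7g^2=2^{kt+2}-1$ for all $kt+2\equiv 0\pmod 3$ beyond $kt=4$, i.e.\ show $(8^n-1)/7$ is a square only for $n\le 2$ --- a global Nagell--Ljunggren-type statement, not a local one. The actual proof in \cite{KZerosNotInSbflds} obtains finiteness from a deep global tool: since $s$ is odd and $\omega/\bar\omega$ is not a root of unity, $(\omega,\bar\omega)$ is a Lucas pair, and the condition $\omega^t+\bar\omega^t=1$ says $u_{2t}=u_t$ for the Lucas sequence $u_n=(\omega^n-\bar\omega^n)/(\omega-\bar\omega)$, so $u_{2t}$ has no primitive prime divisor; the Bilu--Hanrot--Voutier theorem then bounds $2t\le 30$, and inspection of the finitely many remaining cases produces exactly $kt=4$, $a=1$. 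Without BHV (or, alternatively, Beukers' bounds for $x^2+D=2^n$) your plan does not close. A final cosmetic point: for $(k,t)=(2,2)$ the Weil root is $(-3+i\sqrt{7})/2$, not $(-1+i\sqrt{7})/2$; only the underlying zero $\K_{16}(1)=0$ is shared.
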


In his dissertation Dillon proved 
the following remarkable characterization
of a class of monomial bent functions in the trace form.
It was proved later that these functions are also hyperbent
and they have other attractive properties,
such as maximum possible algebraic degree.
\begin{theorem}
\cite{dillon}
\label{cor:dilloncharacterization}
For $a \in \F_{2^m}^*$ the function $f(x) = \Tr^{2m}_1(ax^{2^m-1})$ is 
hyperbent if and only if $\K_{2^m}(a) = 0$.
\end{theorem}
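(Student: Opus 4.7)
The plan is to compute the Walsh transform
\[
 W_f(u) \;=\; \sum_{x \in \F_{2^{2m}}} (-1)^{\Tr^{2m}_1(a x^{2^m-1} + u x)}
\]
at every $u \in \F_{2^{2m}}$ and translate the bent condition $|W_f(u)| = 2^m$ into the vanishing of $\K_{2^m}(a)$. The structural engine is the direct product decomposition $\F_{2^{2m}}^* = U \cdot \F_{2^m}^*$, where $U = \{v : v^{2^m+1} = 1\}$ is the group of norm-one elements; this is a direct product because $\gcd(2^m-1, 2^m+1) = 1$. Since $x^{2^m-1}$ is trivial on $\F_{2^m}^*$, it depends only on the coset $x\F_{2^m}^*$ and factors through a value $v \in U$.

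First I would fibre the sum over $v = x^{2^m-1}$. Choosing any $x_0$ with $x_0^{2^m-1} = v$, the inner sum over $\alpha \in \F_{2^m}^*$ collapses by orthogonality of additive characters: via $\Tr^{2m}_1 = \Tr^m_1 \circ \Tr^{2m}_m$ it equals $-1$ unless $\Tr^{2m}_m(u x_0) = 0$, in which case it equals $2^m - 1$. A short substitution ($x_0^{2^m} = x_0 v$) shows that $\Tr^{2m}_m(u x_0) = 0$ is equivalent to $v = u^{-(2^m-1)}$, so for each $u \neq 0$ exactly one ``spike'' $v_0 \in U$ contributes. One obtains
\[
 W_f(u) \;=\; 1 - S + 2^m (-1)^{\Tr^{2m}_1(a v_0)} \qquad (u \neq 0),
\]
with $S = \sum_{v \in U}(-1)^{\Tr^{2m}_1(av)}$, together with $W_f(0) = 1 + (2^m-1) S$.

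The crux is showing $S = 1 - \K_{2^m}(a)$. Since $a \in \F_{2^m}$ and $v^{2^m} = v^{-1}$ on $U$, we have $\Tr^{2m}_1(av) = \Tr^m_1(a(v + v^{-1}))$. The map $v \mapsto t = v + v^{-1}$ is a 2-to-1 map from $U \setminus \{1\}$ onto $\{t \in \F_{2^m}^* : \Tr^m_1(1/t) = 1\}$, since by Artin--Schreier theory the quadratic $X^2 + tX + 1$ has its roots in $\F_{2^{2m}} \setminus \F_{2^m}$ exactly under that trace condition. Writing the indicator of $\Tr^m_1(1/t) = 1$ as $\tfrac12(1 - (-1)^{\Tr^m_1(1/t)})$ and applying $\sum_{t \in \F_{2^m}^*} (-1)^{\Tr^m_1(at)} = -1$ rearranges the sum into Kloosterman form, yielding $S = 1 - \K_{2^m}(a)$. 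I expect this identity to be the main technical obstacle.

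Substituting back, $W_f(0) = 2^m - (2^m-1)\K_{2^m}(a)$ and $W_f(u) = \K_{2^m}(a) \pm 2^m$ for $u \neq 0$. The $u = 0$ equation forces $\K_{2^m}(a) = 0$, since the opposite sign would demand $\K_{2^m}(a) = 2 + 2/(2^m - 1)$, not an integer for $m \geq 2$; conversely $\K_{2^m}(a) = 0$ makes $|W_f(u)| = 2^m$ everywhere, proving bentness. For the hyperbent strengthening, replacing $x$ by $x^d$ with $\gcd(d, 2^{2m}-1) = 1$ turns the inner dependence into $a v^d$; since $\gcd(d, 2^m+1) = 1$, the map $v \mapsto v^d$ permutes $U$ and leaves $S$ invariant, so the spike analysis applies verbatim and the same criterion $\K_{2^m}(a) = 0$ characterises bentness of every monomial substitution.
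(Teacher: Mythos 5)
Your proposal is correct, but there is nothing in the paper to compare it against: the paper states this theorem purely as a citation of Dillon's dissertation \cite{dillon} and supplies no proof, so you have in effect reconstructed the classical argument from scratch. Your route is the standard one (Dillon's partial-spread computation in the character-sum form later associated with Lachaud and Wolfmann), and every step checks out: the decomposition $\F_{2^{2m}}^*=U\cdot\F_{2^m}^*$ with $U$ the norm-one group; the fact that $x\mapsto x^{2^m-1}$ is constant on cosets of $\F_{2^m}^*$ and maps onto $U$ with fibres of size $2^m-1$; the orthogonality collapse giving the single spike at $v_0=u^{-(2^m-1)}$ (your substitution $x_0^{2^m}=x_0v$ is exactly right); and the identity $S=\sum_{v\in U}(-1)^{\Tr^{2m}_1(av)}=1-\K_{2^m}(a)$, for which your Artin--Schreier analysis of the 2-to-1 map $v\mapsto v+v^{-1}$ from $U\setminus\{1\}$ onto $\{t\in\F_{2^m}^*\,:\,\Tr^m_1(1/t)=1\}$ is the correct mechanism (one verifies $S=1+2\sum_t(-1)^{\Tr^m_1(at)}=1-\K_{2^m}(a)$ as you sketch). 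The resulting values $W_f(0)=2^m-(2^m-1)\K_{2^m}(a)$ and $W_f(u)=\K_{2^m}(a)\pm 2^m$ for $u\neq 0$, and the integrality argument eliminating the negative sign at $u=0$, are all correct. Your treatment of hyperbentness is also sound and is a genuine bonus: since $\gcd(d,2^m+1)=1$, the map $v\mapsto v^d$ permutes $U$, leaving $S$ invariant while the spike analysis is unchanged, so $\K_{2^m}(a)=0$ characterises bentness of every monomial substitution simultaneously; historically this strengthening postdates Dillon (Youssef--Gong), and your single computation subsumes it. The only pedantic gap is $m=1$, where $2+2/(2^m-1)=4$ is an integer and your non-integrality step fails; that case is dispatched directly, since $\K_2(1)=2\neq 0$ and $\Tr^2_1(x)$ is linear, hence not bent, so the equivalence holds there too.
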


It is natural to extend the investigation of Dillon type functions
to vectorial functions, and study the bentness of such functions.
For $a$ belonging to some extension $L$ of $\F_{2^k}$ 
let $a\F_{2^k}^*:=\{au\, : \, u \in \F_{2^k}^*\}$ 
denote the coset of $\F_{2^k}^*$ in $L^*$ containing~$a$.
The following result is easy.

\begin{proposition}
\label{prop-K-vanishes-on-coset}
\cite[Proposition~5]{LL-ISIT16}
Let $a \in \F_{2^m}^*$ and suppose that $k$ divides $m$. 
Then $f(x) = \Tr^{2m}_k(a x^{2^m-1})$ is bent if and only if 
$\K_{2^m}(u)=0$ for all $u\in a\F_{2^k}^*$.
\end{proposition}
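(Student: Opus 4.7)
The plan is to reduce the vectorial bentness of $f$ to the scalar Dillon characterization (Theorem~\ref{cor:dilloncharacterization}) by collapsing the inner trace. First I would fix $\al\in\F_{2^{2m}}$ and $b\in\F_{2^k}^*$ and expand
\[
W_f(\al,b) = \sum_{x\in\F_{2^{2m}}} (-1)^{\Tr^k_1\bigl(b\,\Tr^{2m}_k(ax^{2^m-1})\bigr)+\Tr^{2m}_1(\al x)}.
\]
Since $b\in\F_{2^k}$ and $\Tr^{2m}_k$ is $\F_{2^k}$-linear, one may pull $b$ inside, giving $b\,\Tr^{2m}_k(ax^{2^m-1})=\Tr^{2m}_k(ab\,x^{2^m-1})$. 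Trace transitivity $\Tr^k_1\circ\Tr^{2m}_k=\Tr^{2m}_1$ then reduces the inner exponent to $\Tr^{2m}_1(ab\,x^{2^m-1})$.

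Setting $u:=ab$, the resulting sum is recognisably $W_{g_u}(\al)$, where $g_u(x)=\Tr^{2m}_1(u\,x^{2^m-1})$ is the scalar Dillon-type function with coefficient $u$. Because $k\mid m$, we have $\F_{2^k}^*\subseteq\F_{2^m}^*$, so $u\in\F_{2^m}^*$; moreover $u$ ranges precisely over the coset $a\F_{2^k}^*$ as $b$ ranges over $\F_{2^k}^*$. By definition, $f$ is bent iff $|W_f(\al,b)|=2^m$ for all admissible $\al$ and $b$, which is equivalent to each $g_u$ being a scalar bent function for every $u\in a\F_{2^k}^*$. Applying Theorem~\ref{cor:dilloncharacterization} to each such $u$ yields the claimed Kloosterman condition.

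There is no substantive obstacle here: the proposition amounts to a direct reformulation, and the only delicate point is checking that the scalar $b$ commutes with $\Tr^{2m}_k$, which is immediate from $\F_{2^k}$-linearity of the trace. The argument effectively observes that the vectorial Dillon function with codomain $\F_{2^k}$ encodes, via its components, an entire coset's worth of scalar Dillon functions simultaneously.
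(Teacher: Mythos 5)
Your proposal is correct and is essentially the same argument as the paper's: the paper gives no in-line proof (the proposition is quoted from \cite{LL-ISIT16}), and the proof there is precisely your reduction --- pull $b$ through the $\F_{2^k}$-linear trace, use transitivity $\Tr^k_1\circ\Tr^{2m}_k=\Tr^{2m}_1$ to identify $W_f(\al,b)$ with the Walsh transform of the scalar Dillon function $g_u(x)=\Tr^{2m}_1(ux^{2^m-1})$ with $u=ab$ ranging over $a\F_{2^k}^*$, and then apply Dillon's characterization componentwise. One cosmetic caveat: Theorem~\ref{cor:dilloncharacterization} as stated in this paper says ``hyperbent if and only if $\K_{2^m}(u)=0$,'' whereas your final step needs the equivalence with plain bentness; this is exactly Dillon's original formulation (bent $\Leftrightarrow$ $\K_{2^m}(u)=0$, with hyperbentness the later strengthening), so no genuine gap arises, but citing the bent form explicitly would make the ``only if'' direction airtight.
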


Note that while in general we could take the coefficient $a$ 
to lie in $\F_{2^{2m}}$, the assumption 
$a \in \F_{2^m}^*$ can be made without loss of generality.
This is well known, see for example \cite{LL-ISIT16} for details.

Muratovi\'{c}-Ribi\'{c}, Pasalic and Bajri\'{c}
\cite{Muratovic-Ribic}
proved the following non-existence result.

\begin{theorem}
\cite[Theorem~10]{Muratovic-Ribic}
\label{thm-mrd2}
For $a \in \F_{2^m}^*$ the function
$f(x) = \Tr^{2m}_m(ax^{2^m-1})$ is not bent.
\end{theorem}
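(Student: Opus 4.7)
The plan is to combine Proposition~\ref{prop-K-vanishes-on-coset} (specialized to $k=m$) with a simple character-sum identity that shows the Kloosterman sum cannot vanish identically on $\F_{2^m}^*$.

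First, I would apply Proposition~\ref{prop-K-vanishes-on-coset} with $k=m$. Since $a \in \F_{2^m}^*$, we have $a\F_{2^m}^* = \F_{2^m}^*$, so the criterion becomes: $f$ is bent if and only if $\K_{2^m}(u) = 0$ for every $u \in \F_{2^m}^*$. In other words, proving Theorem~\ref{thm-mrd2} reduces to showing that the Kloosterman sum $\K_{2^m}$ does not vanish on all of $\F_{2^m}^*$.

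To rule that out, I would compute the total sum $\sum_{u \in \F_{2^m}} \K_{2^m}(u)$ directly from the definition. Expanding and swapping the order of summation gives
\[
\sum_{u \in \F_{2^m}} \K_{2^m}(u) = 2^m + \sum_{x \in \F_{2^m}^*} (-1)^{\Tr^m_1(1/x)} \sum_{u \in \F_{2^m}} (-1)^{\Tr^m_1(ux)}.
\]
For each fixed $x \ne 0$ the inner sum vanishes by orthogonality of additive characters, so the whole expression equals $2^m$. On the other hand, a similar one-line calculation shows $\K_{2^m}(0) = 1 + \sum_{x \in \F_{2^m}^*}(-1)^{\Tr^m_1(1/x)} = 0$. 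Subtracting, I obtain $\sum_{u \in \F_{2^m}^*} \K_{2^m}(u) = 2^m \ne 0$, which precludes $\K_{2^m}$ from being identically zero on $\F_{2^m}^*$ and thereby contradicts bentness.

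There is no real obstacle: the only things to check carefully are the index of summation when isolating the $u = 0$ term and the sign in $\K_{2^m}(0) = 0$. Any variant argument exhibiting a single $u \in \F_{2^m}^*$ with $\K_{2^m}(u) \ne 0$ would work, but the global sum identity is the cleanest since it avoids having to invoke divisibility properties of Kloosterman sums or case analysis on $m$.
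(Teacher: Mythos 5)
Your proof is correct. One point of orientation first: this paper does not prove Theorem~\ref{thm-mrd2} at all --- it is quoted from \cite{Muratovic-Ribic} as background, so there is no internal proof to compare against; the comparison is with the cited source, which argues via properties of the Kloosterman value distribution. Your argument is self-contained and elementary, and every step checks out: Proposition~\ref{prop-K-vanishes-on-coset} with $k=m$ gives $a\F_{2^m}^*=\F_{2^m}^*$, so bentness would force $\K_{2^m}$ to vanish identically on $\F_{2^m}^*$; the interchange of summation plus orthogonality ($\sum_{u\in\F_{2^m}}(-1)^{\Tr^m_1(ux)}=0$ for $x\neq 0$) correctly yields $\sum_{u\in\F_{2^m}}\K_{2^m}(u)=2^m$; and the substitution $x\mapsto 1/x$ gives $\K_{2^m}(0)=1+\sum_{y\in\F_{2^m}^*}(-1)^{\Tr^m_1(y)}=1-1=0$, so $\sum_{u\in\F_{2^m}^*}\K_{2^m}(u)=2^m\neq 0$, a contradiction. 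What your approach buys is exactly what you claim: no appeal to divisibility results, to the Lachaud--Wolfmann value distribution, or to Theorem~\ref{thm:KZerosNotInSbflds}, and no case analysis on $m$. It is also instructive to see why this clean averaging argument cannot be pushed further, which explains the machinery in the rest of the paper: for $k<m$ (Theorems~\ref{thm:4mtomnonexist} and~\ref{thm-6k-k}) bentness only forces $\K_{2^m}$ to vanish on the coset $a\F_{2^k}^*$ of size $2^k-1$, and the global sum identity says nothing about vanishing on a subset that small --- hence the paper's recourse to the mod-$16$ trace and subtrace conditions (Proposition~\ref{prop-K-16}) and the bivariate polynomial construction. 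Your proof is the right tool for the $k=m$ case precisely because the coset there is all of $\F_{2^m}^*$.
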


We extended this result as follows.

\begin{theorem}
\label{thm:4mtomnonexist}
\cite[Theorem 12]{LL-ISIT16}
Let $m$ be even and let $a \in \F_{2^m}^*$. The function 
$f(x) = \Tr^{2m}_{m/2}(ax^{2^m-1})$ is not bent.
\end{theorem}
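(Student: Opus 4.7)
The plan is to argue by contradiction.  Suppose $f$ is bent; by Proposition~\ref{prop-K-vanishes-on-coset} (with $k=m/2$) this is equivalent to the Kloosterman vanishing $\K_{2^m}(a\alpha)=0$ for every $\alpha\in\F_{2^{m/2}}^*$.  The endgame will be to push $a$ into the subfield $\F_{2^{m/2}}$ and then invoke Theorem~\ref{thm:KZerosNotInSbflds} to reach a contradiction.

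The central step is the subfield descent: showing $a\in\F_{2^{m/2}}^*$.  For this I would invoke the well-known mod-$8$ divisibility of binary Kloosterman sums, which asserts that for $m\ge 4$ one has $\K_{2^m}(c)\equiv 0\pmod 8$ if and only if $\Tr^m_1(c)=0$.  Each $\K_{2^m}(a\alpha)=0$ is trivially divisible by $8$, hence $\Tr^m_1(a\alpha)=0$ for every $\alpha\in\F_{2^{m/2}}^*$.  Transitivity of the trace gives $\Tr^m_1(a\alpha)=\Tr^{m/2}_1(\alpha\,\Tr^m_{m/2}(a))$ for any $\alpha\in\F_{2^{m/2}}$; vanishing on every nonzero $\alpha$, together with non-degeneracy of the absolute trace on $\F_{2^{m/2}}$, forces $\Tr^m_{m/2}(a)=0$, i.e.\ $a^{2^{m/2}}=a$, so $a\in\F_{2^{m/2}}^*$.

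Now Theorem~\ref{thm:KZerosNotInSbflds} applied to $a\in\F_{2^{m/2}}^*$ with $t=2$ forces $m=4$ and $a=1$.  For every even $m\ge 6$ this already yields the contradiction.  When $m=4$ the coset $a\F_{2^{m/2}}^*$ is all of $\F_4^*$, so the hypothesis further demands $\K_{16}(\omega)=0$ for $\omega\in\F_4^*\setminus\{1\}$, which a second invocation of Theorem~\ref{thm:KZerosNotInSbflds} (with $k=t=2$) forbids.  The main obstacle I foresee is the mod-$8$ divisibility step: it is the one place the argument imports machinery from outside the excerpt, and any alternative proof would need a different route to the subfield descent.
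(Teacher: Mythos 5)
Your argument is correct for even $m\ge 4$, and it is worth noting at the outset that the paper at hand contains no proof of this statement at all: it is quoted from \cite{LL-ISIT16}. Your proof follows exactly the template that the present paper then generalizes in Theorem~\ref{thm-6k-k}: reduce bentness to Kloosterman vanishing on a coset via Proposition~\ref{prop-K-vanishes-on-coset}, convert vanishing into trace conditions via a $2$-adic divisibility criterion, descend $a$ into a proper subfield, and contradict Theorem~\ref{thm:KZerosNotInSbflds}. All the individual steps check out: the identity $\Tr^m_1(a\alpha)=\Tr^{m/2}_1\bigl(\alpha\,\Tr^m_{m/2}(a)\bigr)$ for $\alpha\in\F_{2^{m/2}}$, the non-degeneracy argument giving $a+a^{2^{m/2}}=0$, hence $a\in\F_{2^{m/2}}^*$, the application of Theorem~\ref{thm:KZerosNotInSbflds} with $t=2$, and crucially your separate treatment of $m=4$: this boundary case cannot be skipped, since $\K_{16}(1)=0$ really holds, and your second invocation with $k=t=2$ to rule out $\K_{16}(\omega)=0$ for $\omega\in\F_4^*\setminus\{1\}$ is exactly what is needed. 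For the harder index-$m/3$ situation of Theorem~\ref{thm-6k-k} the trace condition alone is too weak, which is why the paper adds the subtrace condition and replaces your linear non-degeneracy argument with the bivariate polynomial machinery; your proof shows why none of that is needed when $k=m/2$.

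Two refinements. First, the step you flag as the lone external import, the mod-$8$ criterion of Helleseth--Zinoviev, is avoidable using only the excerpt: a Kloosterman zero is divisible by $16$, so Proposition~\ref{prop-K-16} already gives $\T_{2^m}(a\alpha)=0$ for all $\alpha\in\F_{2^{m/2}}^*$ (simply discard the subtrace conclusion), making the whole proof self-contained within the paper. Second, your proof tacitly assumes $m\ge 4$, as the divisibility criteria require, and indeed it must: for $m=2$ the statement as transcribed is false, since $\K_4(\omega)=0$ for a primitive cube root of unity $\omega\in\F_4^*$, so $\Tr^4_1(\omega x^3)$ is a (classical Dillon) bent function from $\F_{16}$ to $\F_2$. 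The intended vectorial setting $k=m/2\ge 2$ excludes this degenerate case, but a careful write-up should state the restriction $m\ge 4$ explicitly rather than inherit it silently from the divisibility lemma.
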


\section{New results}

In a natural continuation of the line of research
reported above, 
our new results proved in this paper concern 
Dillon type monomial bent functions from $\F_{2^{2m}}$
to $\F_{2^{m/3}}$ for $m$ divisible by~$3$.

Let $a\in \F_{2^t}$.
From now on let $\T_{2^t}(a)$ denote the absolute trace of $a$,
\[
\T_{2^t}(a)=\sum_{0\le i<t} a^{2^i}.
\]
Further we define the {\em subtrace} of $a$ by
\[
\S_{2^t}(a)=\sum_{0\le i<j<t} a^{2^i+2^j}.
\]
Note $S_{2^t}(a)\in\F_2$ and it is denoted $e_2$ in \cite{mod256}.

We obtain a necessary condition for $a$ to be
a Kloosterman zero using the following characterization
of Kloosterman sums modulo~16.

\begin{proposition}
\label{prop-K-16}
Assume that $m\ge 4$ and $a\in \F_{2^m}^*$.
If $\K_{2^m}(a)$ is divisible by~16, then 
$\T_{2^m}(a)=0$ and $\S_{2^m}(a)=0$.
\end{proposition}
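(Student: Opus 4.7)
The plan is to deduce the claim from the closed-form expression for $\K_{2^m}(a) \bmod 256$ in terms of the elementary symmetric polynomials $e_1, e_2, \ldots$ of the Galois conjugates of $a$ that was obtained in \cite{mod256}. As the excerpt already notes, $e_1$ coincides with $\T_{2^m}(a)$ and $e_2$ coincides with $\S_{2^m}(a)$ in that framework, so the statement is really a claim about two specific terms of the mod 256 expression.

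The argument then proceeds in two passes. First, reducing the mod 256 formula modulo 8 recovers the well-known Helleseth--Zinoviev style congruence $\K_{2^m}(a) \equiv 4\,\T_{2^m}(a) \pmod 8$, with $\T_{2^m}(a)\in\{0,1\}$ lifted to $\Z$. The hypothesis $16 \mid \K_{2^m}(a)$ forces $8\mid\K_{2^m}(a)$ and hence $\T_{2^m}(a)=0$. Second, I would substitute $e_1=0$ into the mod 256 expression and reduce modulo 16; the expected outcome is the clean congruence $\K_{2^m}(a) \equiv 8\,\S_{2^m}(a)\pmod{16}$, from which the hypothesis immediately yields $\S_{2^m}(a)=0$.

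The main obstacle is the careful bookkeeping required for the second pass. One must verify that, after setting $e_1=0$, every contribution from an $e_j$ with $j\ge 3$, as well as every cross-term in the higher elementary symmetric polynomials, is divisible by 16, so that $e_2$ alone survives and does so with coefficient exactly $8$. The lower bound $m\ge 4$ presumably enters here to guarantee that all elementary symmetric polynomials referenced in the formula of \cite{mod256} are well-defined and that the stated mod 256 expression applies without degenerate corrections at small $m$.
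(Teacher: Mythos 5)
Your proposal is correct and takes essentially the same route as the paper: the paper's entire proof is a citation of Theorems~1 and~2 of \cite{mod256} (equivalently Theorem~3.5 of \cite{KZerosNotInSbflds}), whose content is precisely the two congruences you outline, namely $\K_{2^m}(a)\equiv 4\,\T_{2^m}(a)\pmod 8$ and, once the trace vanishes, $\K_{2^m}(a)\equiv 8\,\S_{2^m}(a)\pmod{16}$ (with $\S_{2^m}(a)$ being the coefficient denoted $e_2$ in \cite{mod256}). The mod-16 bookkeeping you defer is exactly what those cited theorems establish, so your plan reduces to the same appeal to the literature that the paper makes.
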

\begin{proof}
This follows from Theorem~1 and Theorem~2 in \cite{mod256},
or from Theorem~3.5 in \cite{KZerosNotInSbflds}.
\end{proof}

We now state our new results.

\begin{theorem}
\label{thm-6k-k}
Assume that $k$ is odd, $k\ge 3$, and denote $m=3k$.
Assume that $a\in \F_{2^{2m}}^*$.
Then $f(x)= \Tr^{2m}_k(a x^{2^m-1})$ is not bent.
\end{theorem}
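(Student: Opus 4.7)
The plan is to assume for contradiction that $f$ is bent. As noted in the paper we may take $a\in\F_{2^m}^*$. Then Proposition~\ref{prop-K-vanishes-on-coset} gives $\K_{2^m}(au)=0$ for every $u\in\F_{2^k}^*$, and since $m=3k\ge 9$ we may apply Proposition~\ref{prop-K-16} to conclude
\[
\T_{2^m}(au)=0 \qquad\text{and}\qquad \S_{2^m}(au)=0 \qquad\text{for all }u\in\F_{2^k}^*.
\]

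I would first extract the trace condition. Transitivity of trace gives $\T_{2^m}(au)=\Tr^k_1(u\,\Tr^m_k(a))$ for $u\in\F_{2^k}$, and non-degeneracy of $\Tr^k_1$ forces $\Tr^m_k(a)=0$. Setting $\alpha_i:=a^{2^{ik}}$ for $i=0,1,2$ (the Frobenius conjugates of $a$ over $\F_{2^k}$), this says exactly
\[
\sigma_1:=\alpha_0+\alpha_1+\alpha_2=0.
\]

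Next I would unpack the subtrace. Expanding
\[
\S_{2^m}(au)=\sum_{0\le i<j<3k}a^{2^i+2^j}\,u^{2^i+2^j}
\]
and grouping the $\binom{3k}{2}$ pairs by the residues $(i\bmod k,\,j\bmod k)$, using $u^{2^k}=u$, one splits the sum into two parts. Pairs with distinct residues $\{p,q\}$ (nine per unordered residue pair) contribute to the coefficient of $u^{2^p+2^q}$ the quantity
\[
\sum_{i',j'=0}^{2}\alpha_{i'}^{2^p}\alpha_{j'}^{2^q}=\sigma_1^{2^p+2^q}=0.
\]
Pairs with common residue $r\in\{0,\dots,k-1\}$ (namely $\{r,r+k\},\{r,r+2k\},\{r+k,r+2k\}$) contribute $\sigma_2^{2^r}\,u^{2^{(r+1)\bmod k}}$, where $\sigma_2:=\alpha_0\alpha_1+\alpha_0\alpha_2+\alpha_1\alpha_2$. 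The whole subtrace therefore reduces to
\[
\S_{2^m}(au)=\sum_{r=0}^{k-1}\sigma_2^{2^r}\,u^{2^{(r+1)\bmod k}},
\]
a polynomial in $u$ of degree at most $2^{k-1}$. Since it must vanish on all $2^k$ elements of $\F_{2^k}$, it is the zero polynomial, so $\sigma_2=0$.

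Finally, $\sigma_1=\sigma_2=0$ means that $\alpha_0,\alpha_1,\alpha_2$ are roots of $X^3=\alpha_0\alpha_1\alpha_2$ (in characteristic two), so $\alpha_0^3=\alpha_1^3$, i.e., $a^{3(2^k-1)}=1$. Factoring $2^{3k}-1=(2^k-1)(2^{2k}+2^k+1)$ and observing that $2^{2k}+2^k+1\equiv 1\pmod 3$ when $k$ is odd, we get $\gcd\!\bigl(3(2^k-1),\,2^{3k}-1\bigr)=2^k-1$, forcing $a\in\F_{2^k}^*$. But then $\Tr^m_k(a)=3a=a\ne 0$ in characteristic two, contradicting the trace condition.

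I expect the main obstacle to be the book-keeping in the subtrace expansion: splitting the pair contributions into off-diagonal (residues distinct, yielding powers of $\sigma_1$) and diagonal (residues equal, yielding powers of $\sigma_2$) classes, and verifying the precise linearized-polynomial identity above. Once that identity is in hand, the degree count on $\F_{2^k}$ and the concluding parity-of-$k$ gcd argument are routine.
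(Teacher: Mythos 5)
Your proposal is correct, and while it shares the paper's skeleton (Propositions~\ref{prop-K-vanishes-on-coset} and~\ref{prop-K-16} give $\T_{2^m}(au)=\S_{2^m}(au)=0$ on the coset $a\F_{2^k}^*$, then a polynomial-in-$u$ argument over $\F_{2^k}$, then number theory exploiting odd $k$), your execution genuinely differs at both ends. Where the paper reduces exponents modulo $2^k-1$, forms the bivariate polynomials $C(A,u)$ and $D(A,u)$, and uses divisibility by $u^{2^k}-u$ plus a degree count to extract $C_1(a)=D_1(a)=0$, you group the $\binom{3k}{2}$ subtrace pairs by residues modulo $k$ and use $u^{2^k}=u$, obtaining the identity $\S_{2^m}(au)=\sum_{p<q}\sigma_1^{2^p+2^q}u^{2^p+2^q}+\sum_{r}\sigma_2^{2^r}u^{2^{(r+1)\bmod k}}$; this carries the same information (the paper's $C_1(a)$ is your $\sigma_1$, and its $G(a)$ from (\ref{D1-rew}) is exactly your $\sigma_2$), but your derivation of $\sigma_1=0$ by transitivity of trace is more direct, and your bookkeeping verifies (nine pairs per distinct residue class and three per diagonal class, totalling $\binom{3k}{2}$; the exponents $2^{(r+1)\bmod k}$ are distinct, so the root count does kill every coefficient $\sigma_2^{2^r}$). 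More substantially, your endgame diverges: from $\sigma_1=\sigma_2=0$ you conclude each conjugate is a cube root of the norm, hence $a^{3(2^k-1)}=1$ --- which, pleasingly, is exactly the coefficient condition recorded in Theorem~\ref{thm-6k-k-even} --- and the gcd computation $\gcd\bigl(3(2^k-1),2^{3k}-1\bigr)=(2^k-1)\gcd\bigl(3,2^{2k}+2^k+1\bigr)=2^k-1$ for odd $k$ forces $a\in\F_{2^k}^*$, at which point $\Tr^m_k(a)=3a=a\neq 0$ contradicts $\sigma_1=0$ internally. The paper instead derives $a^{2^k-1}\in\F_{2^k}^*$ from the identity (\ref{finito}), runs a primitive-element argument to get $a\in\F_{2^k}$, and must invoke Theorem~\ref{thm:KZerosNotInSbflds} for the final contradiction; your route makes that external result unnecessary for this theorem, a genuine simplification, and it unifies the odd and even cases of $k$ through the single relation $a^{3(2^k-1)}=1$.
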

\begin{proof}
Towards a contradiction assume that $f(x)$ is bent. 
As was mentioned earlier,
we can assume without loss of generality
that $a\in \F_{2^m}^*$.
By Propositions \ref{prop-K-vanishes-on-coset}
and \ref{prop-K-16} it follows that
\begin{equation}
\label{eq-S-condition}
\T_{2^m}(az)=\S_{2^m}(az)=0 \mbox{\ \ \ for\ each\ } z\in\F_{2^k}.
\end{equation}

Consider the bivariate polynomials
\[
C(A,u)=\sum_{0\le i<m} A^{2^i}u^{2^i\bmod (2^k-1)}
\]
and
\[
D(A,u)=\sum_{0\le i<j<m} A^{2^i+2^j}u^{(2^i+2^j)\bmod (2^k-1)}.
\]
We can also view $C$ and $D$
as univariate polynomials in $u$ 
and write them as 
\[
C(A,u)=\sum_{i=0}^{2^k-2}C_i(A)u^i
\]
and
\[
D(A,u)=\sum_{i=0}^{2^k-2}D_i(A)u^i
\]
where $C_i(A)$ and $D_i(A)$ are polynomials in $A$ with coefficients
in $\F_2$.

For integers $s,t$ where $t\ge 0$ and $tk\le s < (t+1)k$ we have
\begin{equation}
\label{eq-red}
2^s\bmod (2^k-1)=2^{s-tk}.
\end{equation}
Indeed $2^s=2^{s-tk}(2^{tk}-1)+2^{s-tk}$
and $2^k-1$ divides $2^{tk}-1$.
Equation (\ref{eq-red}) allows one to write down
closed forms for some of the polynomials $C_i$ and $D_i$. In particular
\begin{eqnarray*}
C_1(A)& =&\sum_{0\le i\le 2} A^{2^{ik}} 
\\
D_1(A)& =&\sum_{1\le i<j\le 3} A^{2^{ik-1}+2^{jk-1}}.
\end{eqnarray*}

We note that 
$\T_{2^m}(bz)=C(b,z)$
and
$\S_{2^m}(bz)=D(b,z)$
for each $b\in\F_{2^m}$
and each $z\in\F_{2^k}$.
Recall that 
the constant $a\in\F_{2^m}^*$ is the monomial coefficient of the putative
vectorial bent function, see the statement
of the theorem and the beginning of this proof. 
From (\ref{eq-S-condition}) 
we get $C(a,z)=D(a,z)=0$ 
for each $z\in\F_{2^k}$. Hence $C(a,u)$,
which is a univariate polynomial in $u$ of degree at most $2^k-2$
with coefficients in $\F_{2^m}$,
must be of the form $C(a,u)=P(u)\cdot (u^{2^k}-u)$ for some polynomial $P(u)$.
By comparing degrees we conclude that $C(a,u)$ is the zero polynomial,
in other words $C_i(a)=0$ for all $i$,
in particular 
\begin{equation}
\label{C1-0}
C_1(a)=0.
\end{equation}
By a similar argument we conclude
\begin{equation}
\label{D1-0}
D_1(a)=0.
\end{equation}

We rewrite the polynomial $D_1(A)$ 
in terms of a new polynomial $G(A)$ as
\begin{equation}
\label{D1-rew}
D_1(A)=\left( A^{2^k+1}+ A^{2^{2k}+1}+ A^{2^{2k}+2^k}\right)^{2^{k-1}}
= G(A)^{2^{k-1}}. 
\end{equation}
Now we calculate
\begin{eqnarray}
G(A)+A^{2^k}C_1(A)&=&A^{2^{2k}+1}+A^{2^{k+1}}
=A^{2^k+1}\left(A^{2^k-1}+A^{2^{2k}-2^k}\right)
\nonumber
\\\
&=&A^{2^k+1}\left(A^{2^k-1}+\left(A^{2^k-1}\right)^{2^k}\right).
\label{finito}
\end{eqnarray}
Since $a\neq 0$ we conclude from (\ref{C1-0})--(\ref{finito})
that $a^{2^k-1}\in\F_{2^k}^*$.

Let $\al$ be a primitive element in $\F_{2^m}=\F_{2^{3k}}$.
We can write non-zero elements of $\F_{2^k}$
in terms of $\al$ as
\[
\F_{2^k}^* = \{ \al^{i\frac{2^{3k}-1}{2^k-1}} \: :\: 0\le i < 2^k-1 \}.
\]
We note that 
\[
\frac{2^{3k}-1}{2^k-1}=2^{2k}+2^k+1=(2^k+2)(2^k-1)+3.
\]
Let $j$ be such that $a=\al^j$.
Since $a^{2^k-1}\in\F_{2^k}^*$, 
there exists integer $s$ such that
\[
j(2^k-1)\equiv s((2^k+2)(2^k-1)+3) \pmod{2^{3k}-1}.
\]
Since $k$ is assumed to be odd,
$\gcd(3,2^k-1)=1$.
Therefore $2^k-1$ divides $s$
and 
\[
a^{2^k-1}=\left(\al^j\right)^{(2^{3k}-1)\frac{s}{2^k-1}}=1.
\] 
Hence $a\in\F_{2^k}$.
By Proposition~\ref{prop-K-vanishes-on-coset}
we have $\K_{2^m}(a)=0$.
This however is a contradiction to Theorem~\ref{thm:KZerosNotInSbflds}.
\end{proof}

By an easy modification
of the final parts of the previous proof
we see that in the case when $k$ is even
we conclude that $a^{2^k-1}$ is a power 
of the primitive cube root of unity in $\F_{2^k}$.
We record this in the next statement.

\begin{theorem}
\label{thm-6k-k-even}
Assume that $k$ is even and denote $m=3k$.
Without loss of generality
assume that $a\in \F_{2^m}^*$.
If $f(x)= \Tr^{2m}_k(a x^{2^m-1})$ is bent,
then $a^{3(2^k-1)}+1=0$.
\end{theorem}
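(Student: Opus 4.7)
The plan is to reuse the entire proof of Theorem \ref{thm-6k-k} verbatim up to the point where we established $a^{2^k-1}\in\F_{2^k}^*$. Nothing in that portion of the argument used the parity of $k$: the derivation of \eqref{eq-S-condition} from Propositions \ref{prop-K-vanishes-on-coset} and \ref{prop-K-16}, the reduction via \eqref{eq-red} giving $C_1(a)=0$ and $D_1(a)=0$, the rewriting \eqref{D1-rew}, and the telescoping identity \eqref{finito} which yields $a^{2^k-1}\in\F_{2^k}^*$, all apply without change (and Proposition \ref{prop-K-16} is available since $m=3k\ge 6$).

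The only place where odd $k$ was used is the final modular arithmetic step. Taking a primitive element $\al\in\F_{2^{3k}}$ and writing $a=\al^j$, the same computation
\[
\frac{2^{3k}-1}{2^k-1}=2^{2k}+2^k+1=(2^k+2)(2^k-1)+3
\]
yields, for some integer $s$,
\[
j(2^k-1)\equiv s\bigl((2^k+2)(2^k-1)+3\bigr)\pmod{2^{3k}-1},
\]
and hence $(2^k-1)\mid 3s$. When $k$ is even we have $3\mid 2^k-1$, so $\gcd(2^k-1,3)=3$ and this only gives $\frac{2^k-1}{3}\mid s$. Writing $s=t(2^k-1)/3$, we obtain
\[
a^{2^k-1}=\al^{s(2^{2k}+2^k+1)}=\al^{t(2^{3k}-1)/3},
\]
which lies in the unique subgroup of order $3$ of $\F_{2^{3k}}^*$ (the cube roots of unity). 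Cubing gives $a^{3(2^k-1)}=1$, i.e.\ $a^{3(2^k-1)}+1=0$.

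There is essentially no obstacle beyond careful bookkeeping of divisibilities: one must notice that $(2^k-1)\mid 3s$ no longer forces $(2^k-1)\mid s$ when $3\mid 2^k-1$, and then verify that $\al^{(2^{3k}-1)/3}$ has order exactly $3$. Unlike Theorem \ref{thm-6k-k}, the conclusion is not a contradiction but merely a necessary condition restricting $a$ to finitely many cosets of $\F_{2^k}^*$ inside $\F_{2^{3k}}^*$, which matches the statement of the theorem.
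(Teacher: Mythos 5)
Your proposal is correct and matches the paper's intent exactly: the paper proves Theorem~\ref{thm-6k-k-even} precisely by the ``easy modification of the final parts'' of the proof of Theorem~\ref{thm-6k-k}, namely observing that for even $k$ one has $\gcd(2^k-1,3)=3$, so $(2^k-1)\mid 3s$ only forces $\frac{2^k-1}{3}\mid s$, whence $a^{2^k-1}$ is a cube root of unity and $a^{3(2^k-1)}+1=0$. Your bookkeeping (including the check that the earlier parts of the proof, in particular Proposition~\ref{prop-K-16} with $m=3k\ge 6$ and the identities \eqref{D1-rew}--\eqref{finito}, never use the parity of $k$) is accurate and fills in exactly the details the paper leaves implicit.
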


\begin{example}
Theorem~\ref{thm-6k-k-even} can be used for a quick and simple discovery
of Dillon monomial bent functions from $\F_{2^{12}}$ to $\F_{2^2}$.
Let $k=2$ in Theorem~\ref{thm-6k-k-even},
which leads us to the equation $a^9+1=0$. 
Now let $a$ be any of the primitive 9-th roots of unity in $\F_{2^6}$,
that is, $a$ is a root of $a^6+a^3+1=0$. Then it can be 
easily verified using Proposition~\ref{prop-K-vanishes-on-coset}
that $f(x)= \Tr^{12}_2(a x^{2^6-1})$ is bent.
These bent functions were first found 
by computer search \cite[Theorem~8]{LL-ISIT16}.
\end{example}

\section{Conclusion}

In order to prove new and stronger non-existence results
for vectorial monomial Dillon type bent functions,
we introduced new proof techniques.
We combined the trace condition and the subtrace condition
for Kloosterman zeros, the latter one
being a result of our previous research \cite{mod256}
on divisibility properties of Kloosterman sums
modulo powers of~2.
We constructed a system of equations involving bivariate
polynomials which allowed us to derive strong
conditions on the coefficient of the putative bent monomial.
Since we have not by far exhausted the results of \cite{mod256},
we believe that further proofs may be inspired
by this paper.

\section*{Acknowledgement}

Research of both authors 
was supported
in part by the Natural Sciences and Engineering Research Council of Canada
(NSERC)
under the research grant RGPIN-2015-06250.

\end{document}